\documentclass[11pt]{article}
\pdfoutput=1

\usepackage{amsthm}
\usepackage{amsfonts}
\usepackage{pifont}

\usepackage{xspace,enumerate}
\usepackage{amsmath,amssymb}
\usepackage[toc,page]{appendix}
\usepackage{thmtools}
\usepackage{thm-restate}
\usepackage{graphicx}
\usepackage{csquotes}
\usepackage{makecell}
\usepackage{tabularx}
\usepackage{relsize}
\usepackage{scalerel}
\usepackage[dvipsnames]{xcolor} 
\usepackage{tikz}
\usepackage{tcolorbox}
\tcbuselibrary{skins,breakable}
\tcbset{enhanced jigsaw}
\usepackage[linesnumbered,ruled,vlined]{algorithm2e}
\usepackage{nicefrac}
\usepackage{mathrsfs}

\usepackage[top=1in, bottom=1in, left=1in, right=1in]{geometry}
\usepackage[pdfstartview=FitH,pdfpagemode=UseNone,colorlinks,linkcolor=NavyBlue,filecolor=blue,citecolor=OliveGreen,urlcolor=NavyBlue,pagebackref]{hyperref}

\usepackage[capitalise,nameinlink]{cleveref}
\usepackage[T1]{fontenc}
\usepackage{mathtools,dsfont}

\newtheorem{lemma}{Lemma}[section]
\newtheorem{theorem}[lemma]{Theorem}

\newtheorem{proposition}[lemma]{Proposition}
\newtheorem{observation}[lemma]{Observation}

\usepackage[T1]{fontenc}
\usepackage[tt=false,type1=true]{libertine}
\usepackage[varqu]{zi4}
\usepackage[libertine]{newtxmath}
\newcommand{\LCS}{\mathsf{LCS}}

\title{Ulam Median is NP-hard for Four Permutations{}\thanks{This paper has been superseded by~\cite{azgor2026ulamrankaggregationhard}, which presents a merged version of this work and~\cite{azgor2026hardnessapproximationrankaggregation}.}}
\author{
Mursalin Habib\thanks{{\tt Department of Computer Science, Rutgers University}. {\tt mursalin.habib@rutgers.edu}. Supported by the National Science Foundation under Grants CCF-2313372 and CCF-2443697.}}
\date{}

\begin{document}

\maketitle

\begin{abstract}
We show that computing a median under the Ulam distance is NP-hard even when the input consists of exactly four permutations. Previously, NP-hardness was known only for an unbounded number of input permutations (Fischer \textit{et al.}, ESA '25). Our result is \emph{tight}, since an Ulam median of three permutations can be computed in polynomial time (Chakraborty--Das--Krauthgamer, SODA '21).
\end{abstract}

\section{Introduction}

Suppose several judges rank the same set of candidates, but no two
rankings quite agree. How should these rankings be combined into a
single consensus? This question of \textit{rank aggregation} appears in many places including voting and social
choice theory~\cite{BrandtCELP16}, machine learning~\cite{LiuLQML07},
genomic data analysis~\cite{Li19}, and recommender systems~\cite{OliveiraDLMP20}. A natural answer is to choose the ranking that is ``collectively closest''
to the inputs. More generally, given points \(p_1,\ldots,p_k\) in a
metric space \((\mathcal{M},d)\), their \emph{median} is a point
\(p^\star\in\mathcal{M}\) minimizing
\[
    \sum_{i=1}^k d(p^\star,p_i).
\]
Thus, once a notion of distance between rankings has been fixed, the
median gives a canonical notion of consensus: it is the ranking that
minimizes the total disagreement with the input rankings
\cite{Kemeny59,YoungL78,Young88,DworkKNS01}.

There are several natural ways to measure the distance between two
rankings. Two of the most prominent ones are Kendall's tau distance and the Ulam
distance. Kendall's tau distance counts the number of pairs of items
whose relative order differs between two rankings. Its computational
complexity is by now well understood
\cite{DworkKNS01,FaginKS03}. Dwork \textit{et al}.~\cite{DworkKNS01} showed more
than two decades ago that computing a Kendall-tau median is NP-hard
even for four input permutations. Two very recent works, by Peters~\cite{Peters26} and
Madarasi~\cite{Madarasi26}, prove NP-completeness for exactly
three input permutations, thereby resolving the remaining fixed-input
case.

In this paper, we focus on the Ulam distance. The Ulam distance between
two permutations is the minimum number of \emph{relocations} required
to transform one into the other, where a relocation removes one symbol
and reinserts it at an arbitrary position. Equivalently, for permutations
\(\pi\) and \(\sigma\), their Ulam distance, denoted by \(d_U(\pi, \sigma)\), is:
\[
    d_U(\pi,\sigma)=n-\LCS(\pi,\sigma),
\]
where \(\LCS(\pi,\sigma)\) denotes the length of a longest common
subsequence. In contrast to Kendall's tau distance, which charges for
every inverted pair, the Ulam distance charges only once for an item
that is moved arbitrarily far in the ranking. It therefore provides a
natural measure when a small number of items may be substantially
misplaced
\cite{CormodeMS01,ChakrabortyDK21,ChakrabortyGJ21,ChakrabortyD0S22}.
The Ulam metric is also closely related to the edit metric on arbitrary
strings, a fundamental similarity measure with applications in
computational biology, DNA storage, pattern recognition, and
classification
\cite{Gusfield1997,Pevzner00,GoldmanBCDLSB13,RashtchianMRAJY17,
Kohonen85,Martinez-HinarejosJC00}.

\begin{sloppypar}
The complexity of the Ulam median has had a somewhat unusual history:
approximation algorithms were developed before the problem was even
shown to be NP-hard. Chakraborty, Das, and
Krauthgamer~\cite{ChakrabortyDK21,Chakraborty0K23} initiated the study
of polynomial-time approximation algorithms for the problem and
obtained an approximation factor strictly below \(2\). They also gave
a polynomial-time algorithm for computing an \textit{exact} Ulam median of three
input permutations~\cite{ChakrabortyDK21}. 
NP-hardness of the general problem was established only later by
Fischer, Goldenberg, Habib, and Karthik~\cite{fischer2025hardness}. 
    
\end{sloppypar}

The reduction in~\cite{fischer2025hardness}, however, uses an unbounded
number of input permutations and therefore does not establish hardness
for any fixed number of inputs.\footnote{More precisely, for a
\textsc{Max-Cut} instance \(G=(V,E)\), the construction
in~\cite{fischer2025hardness} produces \(4|V||E|\) input permutations
of length \(3|V|+2\). Thus, as a function of the permutation length
\(N\), the number of inputs is \(O(N^3)\), but it is not bounded by any
constant.} This leaves a substantial
gap between the known polynomial-time algorithm for three permutations
and the general NP-hardness result. In principle, there could still be
polynomial-time algorithms for four permutations, five permutations,
or indeed for every fixed number of input permutations. This is in
sharp contrast with Kendall's tau distance, for which hardness with a
constant number of inputs has been known since 2001. We are thus led to
the central question of this work.

\begin{center}
    {\emph{What is the smallest number of input permutations\\ for
    which computing an Ulam median is NP-hard?}}
\end{center}
In this paper, we give a tight answer to this question.

\begin{theorem}
\label{thm:main}
Computing an Ulam median is NP-hard even when the input
consists of exactly four permutations.
\end{theorem}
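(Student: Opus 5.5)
We prove \cref{thm:main} by a polynomial-time reduction from a known NP-hard problem to the decision version of the four-permutation Ulam median problem: given \(\pi_1,\pi_2,\pi_3,\pi_4\) and an integer \(t\), decide whether some \(\sigma\) satisfies \(\sum_{i=1}^4 d_U(\sigma,\pi_i)\le t\). Since \(d_U(\sigma,\pi)=n-\LCS(\sigma,\pi)\), this is the same as asking for a permutation \(\sigma\) with \(\sum_i \LCS(\sigma,\pi_i)\ge 4n-t\). The source problem I would use is one whose solutions split naturally into two ``views,'' such as \textsc{Max-Cut} on cubic graphs or a \textsc{3-SAT} variant. The plan is to let two input permutations encode a local binary choice per variable and the other two encode the constraints. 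The intuition comes from the three-permutation algorithm~\cite{ChakrabortyDK21}: with three inputs, a median can be assumed to essentially follow one or two of them. Four inputs break this, because they admit balanced ``2 vs.\ 2'' conflicts, and we exploit such conflicts to force a binary choice.

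\textbf{Gadgets.} For each variable \(v\), I would build a block of fresh symbols, e.g.\ \(a_v, b_v\), together with heavy padding blocks of size \(L\). Inside this block, \(\pi_1,\pi_2\) order \(a_v\) before \(b_v\) and \(\pi_3,\pi_4\) order \(b_v\) before \(a_v\). Any consensus \(\sigma\) must then pick an orientation, and either orientation costs exactly one symbol in each of two LCS's, so both choices are equally good locally and together they encode an assignment. Across variables, \(\pi_1\) and \(\pi_3\) place the blocks in a canonical order, while \(\pi_2\) and \(\pi_4\) interleave selected symbols according to the edges (or clauses). An edge then contributes an extra common-subsequence element to \(\LCS(\sigma,\pi_2)+\LCS(\sigma,\pi_4)\) exactly when its endpoints receive different orientations, i.e.\ when it is cut. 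The padding of size \(L\), polynomially large, is placed identically in all four permutations. Its role is to ensure that any near-optimal \(\sigma\) keeps every padding block contiguous and in canonical order, so that \(\sigma\) is forced into a structured form.

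\textbf{Key steps, in order.}
\begin{enumerate}
\item Define the four permutations and the threshold \(t\) as a function of the target cut size \(k\).
\item Completeness: from a cut of size \(k\), build \(\sigma\) explicitly and compute its four LCS values.
\item Normalization: show that any \(\sigma\) can be transformed, without increasing its total Ulam cost, into a ``canonical'' \(\sigma\) that respects the padding order and in which each variable block is in one of the two orientations. I would prove this by a local exchange argument: move each stray symbol back to its canonical position and show that the sum of the four LCS values does not decrease.
\item Soundness: read off a cut from the canonical \(\sigma\) and show that its cost equals a fixed base value minus the cut size.
\end{enumerate}

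\textbf{Main obstacle.} I expect step 3 to be the hard part. Unlike Kendall tau, the LCS objective is global, so an LCS can skip symbols in non-local ways, and a cheating \(\sigma\) might mix orientations or use a symbol in different roles in different LCS's. My first attempt would be to bound, for each of the four inputs separately, how much \(\LCS(\sigma,\pi_i)\) can gain from a non-canonical placement. I would charge each such gain against a loss on the padding, using the fact that the paddings are aligned in all four permutations and \(L\) exceeds the total number of non-padding symbols. If this direct charging fails, I would fall back on decomposing \(\sigma\) into its restrictions to individual gadgets and proving a per-gadget inequality via a small case analysis over the four relative orders.
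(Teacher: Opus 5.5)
Your proposal is a plan rather than a proof. The gap is at the point where the reduction has its content.

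\textbf{The constraint gadget is missing.} The construction of $\pi_2,\pi_4$ (``interleave selected symbols according to the edges'') is never specified. The property you need is that $\LCS(\sigma,\pi_2)+\LCS(\sigma,\pi_4)$ equals a base value plus the number of cut edges. You assert this but do not derive it, and it is not clear your setup can achieve it.
\begin{itemize}
\item Each of $a_v,b_v$ occurs exactly once in $\pi_2$ and once in $\pi_4$. So a vertex of degree $3$ cannot sit in three independent edge gadgets.
\item An LCS is a single monotone chain, so rewards from different edges interfere. For one edge $uv$, the natural way to get an XOR effect is to put $v$'s block before $u$'s block in $\pi_2$ and $\pi_4$. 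Each of these two LCS values then becomes a \emph{maximum} over the two blocks, and such maxima do not add up over many edges.
\end{itemize}
To get additive per-constraint rewards you need separate symbols for each occurrence of a variable. You then need two further mechanisms: one forcing all copies of a variable to agree, and one ensuring each symbol is counted by only one of the two checking pairs of inputs.

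\textbf{The normalization is also open.} Your step~3 (``local exchange'', or charging gains against padding) is left unresolved, and it is the technical core. Your padding is placed identically in all four inputs and is meant only to keep blocks contiguous. It is not designed to force any choice between the two pairs of inputs.

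\textbf{What the paper does instead.} It supplies exactly these missing pieces, and its binary choice is not an orientation choice.
\begin{itemize}
\item It reduces from 3-SAT with balanced positive and negative occurrences, using one symbol per literal occurrence.
\item A single anchor $Z$ of length $2s+1$ sits at the \emph{end} of $L_i=\beta_iZ$ and at the \emph{start} of $R_i=Z\alpha_i$. Its effect comes precisely from this different placement.
\item The anchor-block lemma sorts the anchors. It then shows the optimal cut against $\gamma_iZ$ lies at $t\le s$ and against $Z\delta_i$ at $t\ge s+1$. Hence one may assume the median is $\tau_LZ\tau_R$, so each literal symbol commits to being ``true'' (left of $Z$) or ``false'' (right of $Z$).
\item The identity $\max_\tau(\LCS(\tau,\gamma_1)+\LCS(\tau,\gamma_2))=|S|+\LCS(\gamma_1,\gamma_2)$ decouples the two sides. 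The score becomes $4M+s+\LCS(\beta_1|_{S_L},\beta_2|_{S_L})+\LCS(\alpha_1|_{S_R},\alpha_2|_{S_R})$.
\item The clause pair reverses each clause triple, so its term counts the clauses hit by $S_L$.
\item The consistency pair swaps the positive and negative blocks of each variable. Its term reaches $s/2$ exactly when $S_R$ contains a full polarity class of every variable.
\end{itemize}
So total distance at most $5s/2-m$ holds iff $\varphi$ is satisfiable. Your proposal has none of these components: the side commitment via a differently placed anchor, the occurrence-level consistency pair, or the decoupling identity.
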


Together with the
polynomial-time algorithm for three permutations
\cite{ChakrabortyDK21}, \Cref{thm:main} completely resolves the
complexity threshold with respect to the permutation count: four is the
smallest number of input permutations for which the problem is
NP-hard.

Our proof is a simple, direct reduction from \textsc{3-SAT}. At a high
level, the reduction constructs two permutations that enforce
consistency among the truth values assigned to different occurrences
of the same variable, and two permutations that ensure that every
clause contains a true literal. A single long anchor block of auxiliary symbols separates
these two roles and forces an optimal median to encode a satisfying
assignment. The construction and its analysis rely only on elementary
properties of longest common subsequences.

Finally, our result exposes a sharp distinction between median problems
under the Ulam and edit metrics. Pairwise Ulam distance is closely
related to the edit distance restricted to
permutations. Nevertheless, an Ulam median must itself be a
permutation: every symbol must occur exactly once. This restriction is
algorithmically consequential. For the edit metric on arbitrary strings, an exact median of any fixed
number of input strings can be computed in polynomial time by a
standard multidimensional dynamic program; see, for example,~\cite{NicolasR03}. In contrast, \Cref{thm:main} shows that requiring the median
to be a permutation makes the problem NP-hard already for four inputs.

\paragraph{Other related work.}
Building on~\cite{ChakrabortyDK21,Chakraborty0K23}, subsequent work
developed faster and more general approximation algorithms with factors
strictly below \(2\)~\cite{JaiswalKY25,CarmelDN26}. The best currently
known approximation factor for Ulam median is \(1.968\), due to Carmel,
Das, and Nguyen~\cite{CarmelDN26}. Complementing this line of work, Bai
\textit{et al}.~\cite{BaiFFGMW26} initiated a systematic study of the
parameterized complexity of Ulam median and center, together with their
\(k\)-median and \(k\)-center clustering generalizations.

\section{Preliminaries}
\label{sec:preliminaries}

For a positive integer \(n\), let \([n]:=\{1,2,\ldots,n\}\). All
alphabets considered in this paper are finite. Given an alphabet
\(\Sigma\), a \emph{permutation of\/ \(\Sigma\)} is a string in which
every symbol of \(\Sigma\) appears exactly once. 

We write strings by concatenating their symbols. More generally, the
concatenation of two strings \(\pi\) and \(\sigma\) is denoted simply
by \(\pi\sigma\). The length of a string \(\pi\) is denoted by
\(|\pi|\), and its reversal is denoted by \(\pi^{\mathcal R}\).

Given a string \(\pi\) over an alphabet \(\Sigma\) and a subset
\(S\subseteq\Sigma\), we write \(\pi|_S\) for the restriction of
\(\pi\) to \(S\), obtained by deleting from \(\pi\) every symbol not
contained in \(S\). In particular, if \(\pi\) is a permutation of
\(\Sigma\), then \(\pi|_S\) is a permutation of \(S\).

A string \(\tau\) is a \emph{subsequence} of a string \(\pi\) if it
can be obtained from \(\pi\) by deleting zero or more symbols without
changing the relative order of the remaining symbols. A
\emph{common subsequence} of two strings \(\pi\) and \(\sigma\) is a
string that is a subsequence of both. We denote by
\(\LCS(\pi,\sigma)\) the length of a longest common subsequence of
\(\pi\) and \(\sigma\).

A \emph{symbol relocation} deletes one symbol from a permutation and
reinserts it at an arbitrary position. The \emph{Ulam distance}
between two permutations \(\pi\) and \(\sigma\) of the same alphabet
\(\Sigma\), denoted by \(d_U(\pi,\sigma)\), is the minimum number of
symbol relocations required to transform \(\pi\) into \(\sigma\). We
will frequently use the standard identity
\begin{equation}
    d_U(\pi,\sigma)=|\Sigma|-\LCS(\pi,\sigma).
    \label{eq:ulam-lcs}
\end{equation}

Consequently, minimizing a sum of Ulam distances is equivalent to
maximizing the corresponding sum of longest-common-subsequence
lengths.

Given permutations \(\pi_1,\ldots,\pi_k\) of a common alphabet
\(\Sigma\), an \emph{Ulam median} is a permutation \(\tau^\star\) of
\(\Sigma\) minimizing
\(
    \sum_{i=1}^k d_U(\tau,\pi_i)
\)
over all permutations \(\tau\) of \(\Sigma\). Equivalently, by
\Cref{eq:ulam-lcs}, an Ulam median maximizes
\(
    \sum_{i=1}^k \LCS(\tau,\pi_i).
\)

For disjoint sets \(A\) and \(B\), we write
\(A\mathbin{\dot\cup}B\) for their disjoint union. Throughout the
reduction, all input permutations are defined over the same alphabet,
even when they are described as concatenations of blocks over
pairwise disjoint subalphabets.

\section{The Reduction}

In this section, we prove~\Cref{thm:main}. We give a reduction from \textsc{3-SAT}. Let
\[
\varphi := C_1\wedge C_2\wedge\cdots\wedge C_m
\]
be a \(3\)-CNF formula on the variable set
\(V=\{x_1,x_2,\ldots,x_n\}\), with \(m\geq 2\) clauses. We may further
assume that every clause contains exactly three literals and that, for
each variable \(x\in V\), the number of positive occurrences of \(x\)
equals the number of negative occurrences of \(x\). This restricted
version of \textsc{3-SAT} remains NP-hard~\cite{BermanKarpinskiScott2003}.\footnote{In fact, Berman, Karpinski, and
Scott~\cite{BermanKarpinskiScott2003} establish NP-hardness for the
more restricted problem \((3,B2)\)-\textsc{SAT}, in which every literal
occurs exactly twice.}

Our construction will contain one symbol of the alphabet for each \emph{literal occurrence}
in \(\varphi\). In particular, two appearances of the same literal will
correspond to distinct symbols. Let \(\Sigma\) denote the set of all
literal occurrences and set \(s:=|\Sigma|\). For each variable
\(x\in V\), let \(\Sigma_x^+\) and \(\Sigma_x^-\) denote its positive
and negative occurrences, respectively. For each clause \(C_j\), let
\(\Sigma(C_j)\subseteq\Sigma\) denote its three literal occurrences. We will reason about the satisfiability of \(\varphi\) using the language of \textit{truth partitions} as defined below.

\paragraph{Truth partitions.}
A partition \(\Sigma=T\mathbin{\dot\cup}F\) is \emph{consistent} if,
for every variable \(x\in V\), either
\[
\Sigma_x^+\subseteq T
\quad\text{and}\quad
\Sigma_x^-\subseteq F,
\]
or
\[
\Sigma_x^-\subseteq T
\quad\text{and}\quad
\Sigma_x^+\subseteq F.
\]
It is \emph{satisfying} if \(T\cap\Sigma(C_j)\neq\varnothing\) for every
\(j\in[m]\).

\begin{observation}
\label{obs:truth-partition}
The formula \(\varphi\) is satisfiable if and only if\/ \(\Sigma\) admits
a consistent and satisfying truth partition.
\end{observation}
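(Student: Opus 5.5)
We prove both directions by translating directly between truth assignments and truth partitions; nothing beyond the definitions is needed.

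For the forward direction, fix a satisfying assignment \(\alpha\colon V\to\{\text{true},\text{false}\}\). Let \(T\) be the set of literal occurrences whose literal evaluates to true under \(\alpha\), and let \(F:=\Sigma\setminus T\). Then \(\Sigma=T\mathbin{\dot\cup}F\).

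\begin{itemize}
\item \emph{Consistency.} Take any variable \(x\). If \(\alpha(x)\) is true, every positive occurrence of \(x\) is true, so \(\Sigma_x^+\subseteq T\). Every negative occurrence is false, so \(\Sigma_x^-\subseteq F\). The case \(\alpha(x)\) false is symmetric.
\item \emph{Satisfaction.} Since \(\alpha\) satisfies each clause \(C_j\), some literal of \(C_j\) is true. The corresponding occurrence lies in \(T\cap\Sigma(C_j)\), so this intersection is nonempty.
\end{itemize}

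For the converse, let \(\Sigma=T\mathbin{\dot\cup}F\) be consistent and satisfying. Define \(\alpha(x)\) to be true if \(\Sigma_x^+\subseteq T\) and \(\Sigma_x^-\subseteq F\), and false otherwise. By consistency, in the false case we have \(\Sigma_x^-\subseteq T\) and \(\Sigma_x^+\subseteq F\).

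Consequently, for every occurrence \(\ell\in\Sigma\), \(\ell\in T\) if and only if the literal of \(\ell\) is true under \(\alpha\). We check this by cases on whether \(\ell\) is a positive or negative occurrence and on the value of \(\alpha(x)\).

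\begin{itemize}
\item \emph{Well-definedness.} If \(x\) has both positive and negative occurrences, the two alternatives in the consistency definition are mutually exclusive, because the relevant sets are nonempty and \(T\cap F=\varnothing\).
\item \emph{Degenerate variables.} If \(x\) has no occurrences at all, both alternatives hold vacuously and \(\alpha(x)\) may be chosen arbitrarily. This does not matter, since such a variable appears in no clause. By the paper's assumption that positive and negative occurrence counts are equal, a variable with one kind of occurrence has the other too.
\end{itemize}

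Finally, for each clause \(C_j\), pick \(\ell\in T\cap\Sigma(C_j)\), which exists because the partition is satisfying. Its literal is true under \(\alpha\), so \(C_j\) is satisfied. Hence \(\alpha\) satisfies \(\varphi\).

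The only step needing any care is making \(\alpha\) well defined in the converse, that is, the degenerate-variable remark above. Otherwise the argument is routine.
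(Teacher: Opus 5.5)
Your proof is correct and matches the paper's argument: the same translation from a satisfying assignment to a partition in the forward direction, and the same assignment-from-partition construction in the converse. Your rule for $\alpha(x)$ requires both $\Sigma_x^+\subseteq T$ and $\Sigma_x^-\subseteq F$, which makes the converse independent of the balanced-occurrence assumption. The paper's simpler rule (``true iff $\Sigma_x^+\subseteq T$'') relies on that assumption tacitly when $\Sigma_x^+=\varnothing$, so your well-definedness remarks are unnecessary but harmless.
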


\begin{proof}
Given a satisfying assignment, let \(T\) contain the occurrences of
literals that evaluate to true and let \(F\) contain those that evaluate
to false. The resulting partition is consistent, and it is satisfying
because every clause contains a true literal.

Conversely, suppose that \(\Sigma=T\mathbin{\dot\cup}F\) is consistent
and satisfying. For each variable \(x\), set \(x=\mathsf{true}\) if
\(\Sigma_x^+\subseteq T\), and set \(x=\mathsf{false}\) otherwise.
Consistency ensures that every occurrence in \(T\) evaluates to true.
Since \(T\) intersects every clause, this assignment satisfies
\(\varphi\).
\end{proof}

We now construct two pairs of permutations. The first pair checks whether the proposed set of true literals contains a witness from every clause. The second pair checks that, for each variable, the proposed set of false literals includes either all of its
positive occurrences or all of its negative occurrences. An anchor block of auxiliary symbols will then force the
median to separate the true literals from the false ones.

For the remainder of the article, fix an arbitrary total ordering of the symbols in \(\Sigma\).

\paragraph{Clause-witness gadget.}
For each clause \(C_j\), let \(A_j\in\Sigma^3\) be the string obtained
by listing the three symbols in \(\Sigma(C_j)\) according to the fixed
ordering of \(\Sigma\). Let \(A_j^{\mathcal R}\) denote its reversal,
and define
\[
\beta_1:=A_1A_2\cdots A_m,
\qquad
\beta_2:=A_1^{\mathcal R}A_2^{\mathcal R}\cdots A_m^{\mathcal R}.
\]

The basic property of this gadget is
\(\LCS(\beta_1,\beta_2)=m\): a common subsequence may contain at most
one symbol from each clause and may choose one symbol from every clause.
We will use the following stronger form where we restrict to an arbitrary subset of all literal occurrences.

\begin{lemma}
\label{lem:clause-gadget}
For every \(S\subseteq\Sigma\),
\[
\LCS(\beta_1|_S,\beta_2|_S)
=
\bigl|\{j\in[m]:S\cap\Sigma(C_j)\neq\varnothing\}\bigr|.
\]
In particular, \(\LCS(\beta_1|_S,\beta_2|_S)=m\) if and only if \(S\)
contains at least one literal occurrence from every clause.
\end{lemma}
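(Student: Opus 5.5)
The plan is to prove the two inequalities separately, using the block structure of \(\beta_1|_S\) and \(\beta_2|_S\). Write \(J(S):=\{j\in[m]:S\cap\Sigma(C_j)\neq\varnothing\}\). The first step is to note how restriction interacts with concatenation: it commutes with concatenation, and it commutes with reversal. Hence
\[
\beta_1|_S = (A_1|_S)(A_2|_S)\cdots(A_m|_S),
\qquad
\beta_2|_S = (A_1|_S)^{\mathcal R}(A_2|_S)^{\mathcal R}\cdots(A_m|_S)^{\mathcal R}.
\]
The block \(A_j|_S\) is nonempty exactly when \(j\in J(S)\). The blocks occupy the same relative positions in both strings, since the clauses appear in the same order \(1,\ldots,m\).

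For the lower bound \(\LCS\geq|J(S)|\), I would pick, for each \(j\in J(S)\), one symbol \(a_j\in S\cap\Sigma(C_j)\). The string \((a_j)_{j\in J(S)}\), listed in increasing order of \(j\), is a subsequence of both restricted strings, because block \(j\) precedes block \(j'\) in both whenever \(j<j'\).

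For the upper bound, take any common subsequence \(\tau\). The key claim is that \(\tau\) contains at most one symbol from each \(\Sigma(C_j)\). Suppose two distinct symbols \(a,b\in\Sigma(C_j)\) both occur in \(\tau\). Since literal occurrences are distinct symbols, each symbol appears exactly once in each string. Within block \(j\), the order of \(a,b\) in \(\beta_1\) is the fixed ordering, while in \(\beta_2\) it is reversed. Therefore \(a,b\) appear in opposite orders in the two strings, and no common subsequence can contain both. Moreover, symbols of \(\tau\) all lie in \(S\), so they come only from clauses in \(J(S)\). Together these give \(|\tau|\leq|J(S)|\).

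The "in particular" statement follows immediately, since \(|J(S)|=m\) if and only if \(J(S)=[m]\). One subtlety needs care: the three occurrences in a clause are distinct symbols, even if the clause repeats a variable, so \(\Sigma(C_j)\) has size three and the sets \(\Sigma(C_j)\) are pairwise disjoint. This disjointness is what makes the block decomposition valid. The rest is routine, and I expect no real obstacle.
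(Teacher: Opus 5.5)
Your proposal is correct and follows essentially the same argument as the paper: the upper bound holds because each clause block is reversed between $\beta_1$ and $\beta_2$, so a common subsequence contains at most one symbol per clause, and the lower bound comes from picking one symbol from each clause that $S$ meets, listed in increasing order of $j$. Your remarks that restriction commutes with concatenation and reversal, and that the sets $\Sigma(C_j)$ are pairwise disjoint, only make explicit what the paper uses implicitly.
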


\begin{proof}
For each \(j\in[m]\), the relative order of the symbols in
\(\Sigma(C_j)\) is reversed between \(\beta_1\) and \(\beta_2\).
Therefore, a common subsequence of \(\beta_1|_S\) and \(\beta_2|_S\)
can contain at most one symbol from \(S\cap\Sigma(C_j)\). Hence
\[
\LCS(\beta_1|_S,\beta_2|_S)
\leq
\bigl|\{j\in[m]:S\cap\Sigma(C_j)\neq\varnothing\}\bigr|.
\]

Conversely, for every \(j\in[m]\) such that
\(S\cap\Sigma(C_j)\neq\varnothing\), choose one symbol from this
intersection. Since the clauses appear in the same order in
\(\beta_1\) and \(\beta_2\), the chosen symbols, listed in increasing
order of \(j\), form a common subsequence. This proves the reverse
inequality and hence the claimed equality.
\end{proof}

\paragraph{Variable-consistency gadget.}
For each variable \(x\in V\), let \(B_x^+\) and \(B_x^-\) be the
strings obtained by listing the symbols in \(\Sigma_x^+\) and
\(\Sigma_x^-\), respectively, according to the fixed ordering of
\(\Sigma\). Define
\[
\alpha_1
:=
B_{x_1}^+B_{x_1}^-
B_{x_2}^+B_{x_2}^-
\cdots
B_{x_n}^+B_{x_n}^-,
\]
\[
\alpha_2
:=
B_{x_1}^-B_{x_1}^+
B_{x_2}^-B_{x_2}^+
\cdots
B_{x_n}^-B_{x_n}^+.
\]

Because the positive and negative blocks of each variable appear in
opposite orders, a common subsequence of \(\alpha_1\) and \(\alpha_2\) can use symbols from at most one
polarity block for each variable. Since the two blocks have equal
length, this immediately suggests that
\(\LCS(\alpha_1,\alpha_2)=s/2\). We prove the following stronger fact.

\begin{lemma}
\label{lem:consistency-gadget}
For every \(S\subseteq\Sigma\),
\[
\LCS(\alpha_1|_S,\alpha_2|_S)
=
\sum_{x\in V}
\max\bigl\{
|S\cap\Sigma_x^+|,
|S\cap\Sigma_x^-|
\bigr\}.
\]
Consequently,
\(\LCS(\alpha_1|_S,\alpha_2|_S)\leq s/2\), with equality if and only
if, for every variable \(x\in V\), the set \(S\) contains all positive
occurrences of \(x\) or all negative occurrences of \(x\).
\end{lemma}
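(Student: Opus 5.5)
We prove the equality by matching upper and lower bounds, following the same template as \Cref{lem:clause-gadget}, and then read off the two consequences.

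\textbf{Upper bound.} Fix a common subsequence $\tau$ of $\alpha_1|_S$ and $\alpha_2|_S$. In both strings, all symbols of $\Sigma_{x_i}\coloneqq\Sigma_{x_i}^+\cup\Sigma_{x_i}^-$ precede all symbols of $\Sigma_{x_{i'}}$ whenever $i<i'$, so the variable blocks occur in the same order. Hence $\tau$ decomposes as $\tau_{x_1}\tau_{x_2}\cdots\tau_{x_n}$, where $\tau_x=\tau|_{\Sigma_x}$ is a common subsequence of $(B_x^+B_x^-)|_S$ and $(B_x^-B_x^+)|_S$.

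We claim that $\tau_x$ cannot contain both a symbol $a\in\Sigma_x^+$ and a symbol $b\in\Sigma_x^-$. Indeed, $a$ precedes $b$ in $\alpha_1$ but follows it in $\alpha_2$, so no common subsequence contains both. Therefore $\tau_x\subseteq S\cap\Sigma_x^+$ or $\tau_x\subseteq S\cap\Sigma_x^-$. Consequently,
\[
|\tau_x|\le\max\bigl\{|S\cap\Sigma_x^+|,|S\cap\Sigma_x^-|\bigr\},
\]
and summing over $x\in V$ gives the upper bound.

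\textbf{Lower bound.} For each $x$, choose the larger of $S\cap\Sigma_x^+$ and $S\cap\Sigma_x^-$, and list its elements in the fixed ordering of $\Sigma$. Since $B_x^{\pm}$ lists symbols in that same ordering in both $\alpha_1$ and $\alpha_2$, this list is a common subsequence of the restricted blocks. Concatenating these lists over $x_1,\ldots,x_n$ gives a common subsequence of the claimed total length, because the variables appear in the same order in both strings.

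\textbf{Consequences.} By the balance assumption, $|\Sigma_x^+|=|\Sigma_x^-|=|\Sigma_x|/2$ for every $x$. Thus each term of the sum is at most $|\Sigma_x|/2$, and summing gives at most $s/2$. Equality holds if and only if every term attains $|\Sigma_x|/2$, that is,
\[
\max\bigl\{|S\cap\Sigma_x^+|,|S\cap\Sigma_x^-|\bigr\}=|\Sigma_x^+|=|\Sigma_x^-|\quad\text{for every }x\in V.
\]
This says exactly that $S\supseteq\Sigma_x^+$ or $S\supseteq\Sigma_x^-$ for every $x$.

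The only point requiring care is the decomposition step in the upper bound. It is valid because, in both strings, every symbol of an earlier variable precedes every symbol of a later variable. Hence any common subsequence respects this block order, and its length is simply the sum of the per-variable contributions.
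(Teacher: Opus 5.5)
Your proof is correct and follows essentially the same argument as the paper: a common subsequence uses at most one polarity block per variable, and the bound is attained by concatenating the larger restricted block of each variable in variable order. The balance condition $|\Sigma_x^+|=|\Sigma_x^-|$ then gives the equality characterization. One minor remark: the upper bound does not need the block-order decomposition, since $|\tau|=\sum_{x\in V}|\tau_x|$ holds for any string over $\Sigma$; the common variable order is what your lower-bound concatenation actually relies on.
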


\begin{proof}
Fix \(x\in V\). Every symbol of \(B_x^+\) precedes every symbol of
\(B_x^-\) in \(\alpha_1\), while every symbol of \(B_x^-\) precedes
every symbol of \(B_x^+\) in \(\alpha_2\). Therefore, a common
subsequence contains symbols from at most one of these two blocks. Its
contribution from the \(x\)-gadget is consequently at most
\(\max\{|S\cap\Sigma_x^+|,|S\cap\Sigma_x^-|\}\).

This bound can be attained simultaneously for every variable: for each
\(x\), choose the larger of \(S\cap\Sigma_x^+\) and
\(S\cap\Sigma_x^-\), list its symbols in their common internal order,
and concatenate these choices in variable order. This proves the
formula.

Let \(r_x:=|\Sigma_x^+|=|\Sigma_x^-|\). We have \(\sum_x r_x=s/2\). Thus,  \(\LCS(\alpha_1|_S,\alpha_2|_S)\leq s/2\) and equality holds precisely when, for every
\(x\), one of \(S\cap\Sigma_x^+\) and \(S\cap\Sigma_x^-\) has size
\(r_x\), which is equivalent to \(S\) containing either all positive
occurrences of \(x\) or all negative occurrences of \(x\).
\end{proof}

\paragraph{Adding an anchor block.}
Let \(\Sigma_b:=\{z_1,z_2,\ldots,z_M\}\) be an alphabet disjoint from
\(\Sigma\), where \(M:=2s+1\), and define
\(Z:=z_1z_2\cdots z_M\). We refer to \(Z\) as the \textit{anchor block} and its symbols as \textit{anchor symbols}. Finally, we construct the four input permutations
\[
L_1:=\beta_1Z,
\qquad
L_2:=\beta_2Z,
\qquad
R_1:=Z\alpha_1,
\qquad
R_2:=Z\alpha_2.
\]
The following lemma shows that the anchor block acts as a separator:
every candidate median may be transformed, without worsening any of
its four LCS scores, into one that places every literal occurrence
entirely to one side of \(Z\).

\begin{lemma}[Anchor-block lemma]
\label{lem:anchor-block}
Let \(\gamma_1,\gamma_2,\delta_1,\delta_2\) be arbitrary permutations
of\/ \(\Sigma\). For every permutation \(\pi\) of\/
\(\Sigma\mathbin{\dot\cup}\Sigma_b\), there exist a partition
\(\Sigma=S_L\mathbin{\dot\cup}S_R\) and permutations \(\tau_L,\tau_R\)
of \(S_L,S_R\), respectively, such that
\(\pi^\star:=\tau_LZ\tau_R\) satisfies
\[
\LCS(\pi^\star,\gamma_iZ)\geq\LCS(\pi,\gamma_iZ)
\quad\text{and}\quad
\LCS(\pi^\star,Z\delta_i)\geq\LCS(\pi,Z\delta_i)
\]
for each \(i\in\{1,2\}\).
\end{lemma}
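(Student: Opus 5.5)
The plan is to produce \(\pi^\star\) by cutting \(\pi\) at a single position \(p\). Let \(S_L\) be the symbols of \(\Sigma\) that occur before position \(p\) in \(\pi\), and let \(S_R:=\Sigma\setminus S_L\). Set \(\tau_L:=\pi|_{S_L}\) and \(\tau_R:=\pi|_{S_R}\). The literal symbols then keep their relative order on each side, and only the anchor symbols are rearranged into the block \(Z\). For this choice of \(\pi^\star\) we have the lower bounds
\[
\LCS(\pi^\star,\gamma_iZ)\ \ge\ \LCS(\tau_L,\gamma_i)+M,
\qquad
\LCS(\pi^\star,Z\delta_i)\ \ge\ M+\LCS(\tau_R,\delta_i).
\]
For \(\pi\) itself, any common subsequence with \(\gamma_iZ\) has the form (literals) followed by (anchors in increasing index order). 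It is therefore determined by a split point \(q\) in \(\pi\), and
\[
\LCS(\pi,\gamma_iZ)=\max_q\Bigl[\LCS\bigl(\pi_{<q}|_\Sigma,\gamma_i\bigr)+\mathrm{LIS}_Z\bigl(\pi_{\ge q}\bigr)\Bigr],
\]
where \(\mathrm{LIS}_Z\) is the length of the longest subsequence of anchor symbols appearing in increasing index order. The analogous formula, with the roles of the two sides exchanged, holds for \(Z\delta_i\).

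\textbf{Easy case.} Let \(a\) be the length of a longest index-increasing subsequence of anchors in \(\pi\). Any common subsequence with \(\gamma_iZ\) or \(Z\delta_i\) uses at most \(s\) literal symbols and at most \(a\) anchors. If \(a\le M-s=s+1\), then every score of \(\pi\) is at most \(M\), which any \(\pi^\star\) of the above form already attains. Hence any cut works here, for instance \(p\) at the very beginning.

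\textbf{Main case, \(a\ge s+1\).} Fix a maximum increasing anchor chain \(z_{k_1},\dots,z_{k_a}\) in \(\pi\), and choose \(p\) as the position of one of its elements. I would compare each optimal split \(q\) for \(\gamma_iZ\) with \(p\).

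If \(q\le p\), every literal used lies in \(S_L\), so it forms a common subsequence of \(\tau_L\) and \(\gamma_i\). Since the anchor count is at most \(M\), the score is dominated by that of \(\pi^\star\).

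If \(q>p\), the literal part splits into a piece from \(S_L\) and at most \(|S_R\cap\pi_{(p,q)}|\) symbols from \(S_R\). The gain is therefore at most the number of \(S_R\)-literals between \(p\) and \(q\). I must show that the anchor part loses at least this much relative to \(M\). Here I would use the maximality of the chain: the anchors used after \(q\) can be combined with the chain elements before \(p\) whose index is smaller. So \(M\) minus the anchor count is at least the number of anchors "skipped", and I would try to charge each \(S_R\)-literal between \(p\) and \(q\) to a skipped anchor.

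The symmetric argument handles \(Z\delta_i\) when \(q<p\).

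\textbf{Main obstacle.} The delicate step is choosing \(p\) so that the charging works simultaneously for all four inputs. I expect this to be the crux. My first attempt would be a potential argument: among the cut positions on the chain, pick \(p\) to maximize the sum of the four lower bounds above. I would then show by an exchange argument that moving \(p\) one chain element left or right changes each side by at most the number of literals crossing, while \(M=2s+1\) leaves enough slack to absorb these changes.

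If that fails, the fallback is to apply the argument to the two \(L\)-scores and the two \(R\)-scores separately, using that they constrain opposite sides of \(p\). Specifically, I would take \(p\) to be the position of the chain element at which the \(\gamma\)-optimal splits end and the \(\delta\)-optimal splits begin, and verify that the inequalities \(q\le p\) for both \(\gamma_i\) and \(q\ge p\) for both \(\delta_i\) can be arranged without loss.
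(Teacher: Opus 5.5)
Your setup is sound, and it has the same shape as the paper's construction. The following steps are all correct:
\begin{itemize}
\item cutting $\pi$ at one position and keeping the literal order on each side;
\item the lower bounds $\LCS(\pi^\star,\gamma_iZ)\ge M+\LCS(\tau_L,\gamma_i)$ and $\LCS(\pi^\star,Z\delta_i)\ge M+\LCS(\tau_R,\delta_i)$;
\item the split-point formula;
\item the easy case;
\item the case $q\le p$.
\end{itemize}
However, the proposal stops exactly at the step that carries the lemma. You never say which $p$ to use, and you never bound splits on the wrong side of $p$ (that is, $q>p$ for $\gamma_iZ$ and $q<p$ for $Z\delta_i$).

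Your first suggested fix is false. Take $s=3$, $\Sigma=\{x_1,x_2,x_3\}$, $M=7$, $\pi=z_1x_1x_2x_3z_2\cdots z_7$, $\gamma_1=\gamma_2=x_1x_3x_2$ and $\delta_1=\delta_2=x_1x_2x_3$. The unique maximum chain is $z_1,\dots,z_7$. Cutting at $z_1$ gives lower-bound sum $4M+6$, and cutting at any $z_k$ with $k\ge 2$ gives $4M+4$. So your potential selects $\pi^\star=Zx_1x_2x_3$. Yet the common subsequence $x_1x_2z_2\cdots z_7$ shows $\LCS(\pi,\gamma_1Z)\ge 8$, whereas $\LCS(Zx_1x_2x_3,\gamma_1Z)=7$. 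Maximizing the sum trades one inequality against another, but the lemma requires each of the four individually. Your fallback simply assumes that the $\gamma$-optimal splits can all be placed before the $\delta$-optimal ones, which is precisely what has to be proved.

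The missing idea is to count anchors by position rather than along an increasing chain. Let $p$ be the position of the $(s+1)$-st anchor symbol of $\pi$, whatever its label.
\begin{itemize}
\item If $q>p$, at least $s+1$ anchors precede $q$. Hence $\mathrm{LIS}_Z(\pi_{\ge q})\le M-s-1=s$, and the corresponding term is at most $2s<M\le\LCS(\pi^\star,\gamma_iZ)$, since $Z$ is a common subsequence of $\pi^\star$ and $\gamma_iZ$.
\item Symmetrically, a split for $Z\delta_i$ at or before $p$ has at most $s$ anchors to its left, so it scores at most $2s$.
\item Splits on the correct side of $p$ are handled by your own $q\le p$ argument and its mirror image.
\end{itemize}
This is exactly where $M=2s+1$ enters, and it makes both the easy case and the maximum chain unnecessary.

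The paper uses the same cut in slightly different packaging. It first sorts the anchors in place, which does not decrease any score and keeps the anchor positions fixed. It then shows that optimal $\gamma_i$-splits need at most $s$ anchors before them and optimal $\delta_i$-splits at least $s+1$. Finally it cuts at $z_{s+1}$, which is precisely the $(s+1)$-st anchor position of $\pi$.
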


We defer the proof of \Cref{lem:anchor-block} to
\Cref{sec:anchor-block-proof}. Note that the lemma is completely independent of the
\textsc{3-SAT} instance \(\varphi\): it applies to any four
permutations of \(\Sigma\).  It allows us to restrict attention to permutations of the form
\(\tau_LZ\tau_R\) without loss of generality when analyzing candidate medians of
\(L_1,L_2,R_1,R_2\). Each such ``canonical'' permutation induces a partition
\(\Sigma=S_L\mathbin{\dot\cup}S_R\), where \(S_L\) and \(S_R\) are the
symbols placed before and after \(Z\), respectively. Once this
partition is fixed, the only remaining choice is the ordering of the
symbols within \(S_L\) and \(S_R\). The following lemma determines the
maximum possible sum of the four LCS lengths for a fixed partition.

\begin{lemma}[Fixed-partition optimum]
\label{lem:canonical-score}
Fix a partition
\(\Sigma=S_L\mathbin{\dot\cup}S_R\). Then
\[
\max_{\tau_L,\tau_R}
\sum_{i=1}^2
\Bigl(
\LCS(\tau_LZ\tau_R,L_i)
+
\LCS(\tau_LZ\tau_R,R_i)
\Bigr)
=
4M+s
+\LCS(\beta_1|_{S_L},\beta_2|_{S_L})
+\LCS(\alpha_1|_{S_R},\alpha_2|_{S_R}),
\]
where the maximum ranges over all permutations \(\tau_L\) of \(S_L\)
and \(\tau_R\) of \(S_R\).
\end{lemma}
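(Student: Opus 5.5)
We prove the formula by establishing a matching upper bound and lower bound. Write \(\pi=\tau_LZ\tau_R\), and recall that \(|S_L|+|S_R|=s\).

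\textbf{Upper bound.} We compute \(\LCS(\pi,L_i)\) with \(L_i=\beta_iZ\) exactly. Consider a common subsequence of \(\pi\) and \(\beta_iZ\), and split it into three parts: the symbols of \(\Sigma\) that lie before \(Z\) in \(\pi\) (so they lie in \(S_L\)), the anchor symbols, and the symbols of \(S_R\). In \(L_i\) every \(\Sigma\)-symbol precedes every anchor symbol. Hence, if the subsequence uses any anchor symbol, it cannot use any symbol of \(S_R\), since in \(\pi\) those come after \(Z\). Therefore
\[
\LCS(\pi,L_i)\le \max\bigl\{\LCS(\tau_L,\beta_i|_{S_L})+M,\ \LCS(\tau_L\tau_R,\beta_i)\bigr\}\le \max\{\LCS(\tau_L,\beta_i|_{S_L})+M,\ s\}.
\]
Since \(M=2s+1>s\), the first term dominates, and this bound is also attained. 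So \(\LCS(\pi,L_i)=M+\LCS(\tau_L,\beta_i|_{S_L})\). The symmetric argument gives \(\LCS(\pi,R_i)=M+\LCS(\tau_R,\alpha_i|_{S_R})\).

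The problem therefore reduces to two independent maximizations:
\[
\max_{\tau_L}\bigl(\LCS(\tau_L,\beta_1|_{S_L})+\LCS(\tau_L,\beta_2|_{S_L})\bigr)
\]
and the analogous one for \(\tau_R\) with \(\alpha_1,\alpha_2\). The key claim is that for any two permutations \(p,q\) of a set \(S\),
\[
\max_\tau\bigl(\LCS(\tau,p)+\LCS(\tau,q)\bigr)=|S|+\LCS(p,q).
\]

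To prove the upper bound in the claim, take longest common subsequences \(X\) of \((\tau,p)\) and \(Y\) of \((\tau,q)\). The symbols in \(X\cap Y\) appear in the same order in \(\tau\), \(p\) and \(q\), so they form a common subsequence of \(p\) and \(q\). By inclusion–exclusion,
\[
|X|+|Y|=|X\cup Y|+|X\cap Y|\le |S|+\LCS(p,q).
\]

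For the lower bound, take \(\tau=p\). Then \(\LCS(p,p)+\LCS(p,q)=|S|+\LCS(p,q)\).

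\textbf{Combining.} Apply the claim on \(S_L\) to the restrictions of \(\beta_1,\beta_2\) and on \(S_R\) to the restrictions of \(\alpha_1,\alpha_2\). The two optimizations are independent because \(\tau_L\) and \(\tau_R\) are chosen separately. Summing the four terms gives
\[
4M+|S_L|+|S_R|+\LCS(\beta_1|_{S_L},\beta_2|_{S_L})+\LCS(\alpha_1|_{S_R},\alpha_2|_{S_R}),
\]
and \(|S_L|+|S_R|=s\), which is the stated value.

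The main subtle step is the exact decomposition of \(\LCS(\tau_LZ\tau_R,\beta_iZ)\). The argument only needs the case split on whether the subsequence uses any anchor symbol, together with \(M>s\) so that an alignment avoiding all anchor symbols can never win. The inclusion–exclusion claim is the other key ingredient, and it is short.
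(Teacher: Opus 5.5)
Your proof is correct and follows the paper's argument. It uses the same case split with $M>s$ to get $\LCS(\tau_LZ\tau_R,L_i)=M+\LCS(\tau_L,\beta_i|_{S_L})$ (and symmetrically for $R_i$), then applies the identity $\max_\tau\bigl(\LCS(\tau,p)+\LCS(\tau,q)\bigr)=|S|+\LCS(p,q)$ to each side independently. The only difference is that you prove this identity directly by inclusion--exclusion, whereas the paper cites the triangle inequality for Ulam distance; your argument is essentially the standard proof of that inequality.
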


\begin{proof}
Let \(\pi=\tau_LZ\tau_R\). Since \(M>s\), a longest common subsequence
of \(\pi\) and \(L_i=\beta_iZ\) cannot contain a symbol from
\(\tau_R\): any such subsequence would contain no anchor symbol and
would therefore have length at most \(s<M\), whereas \(Z\) itself is a
common subsequence of length \(M\). Hence
\[
\LCS(\pi,L_i)=M+\LCS(\tau_L,\beta_i|_{S_L}).
\]
Symmetrically,
\(\LCS(\pi,R_i)=M+\LCS(\tau_R,\alpha_i|_{S_R})\).

It remains to optimize the orders \(\tau_L\) and \(\tau_R\). We use the
following elementary identity: if \(\gamma_1,\gamma_2\) are
permutations of the same alphabet \(S\), then
\[
\max_{\tau}
\bigl(
\LCS(\tau,\gamma_1)+\LCS(\tau,\gamma_2)
\bigr)
=
|S|+\LCS(\gamma_1,\gamma_2),
\tag{\(\star\)}
\]
where the maximum ranges over all permutations \(\tau\) of \(S\). It follows from the triangle inequality for Ulam distance, and
equality is attained by taking \(\tau=\gamma_1\).

Applying \((\star)\) to the two sides independently gives
\[
\max_{\tau_L}
\sum_{i=1}^2\LCS(\tau_L,\beta_i|_{S_L})
=
|S_L|+\LCS(\beta_1|_{S_L},\beta_2|_{S_L}),
\]
and
\[
\max_{\tau_R}
\sum_{i=1}^2\LCS(\tau_R,\alpha_i|_{S_R})
=
|S_R|+\LCS(\alpha_1|_{S_R},\alpha_2|_{S_R}).
\]
Adding these equalities and using \(|S_L|+|S_R|=s\) proves the lemma.
\end{proof}



We can now establish the correctness of the reduction.

\begin{proposition}
\label{prop:reduction-correctness}
The formula \(\varphi\) is satisfiable if and only if there exists a
permutation \(\pi^\star\) of the alphabet
\(\Sigma\mathbin{\dot\cup}\Sigma_b\) such that
\[
\sum_{\rho\in\{L_1,L_2,R_1,R_2\}}
d_U(\pi^\star,\rho)
\leq
\frac{5s}{2}-m.
\]
\end{proposition}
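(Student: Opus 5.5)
Since every input permutation has length \(N:=s+M\), the identity \eqref{eq:ulam-lcs} turns the distance sum into \(4N-\sum_\rho\LCS(\pi,\rho)\). With \(4N=4M+4s\), the target inequality \(\sum_\rho d_U(\pi,\rho)\le \tfrac{5s}{2}-m\) is equivalent to
\[
\sum_{\rho\in\{L_1,L_2,R_1,R_2\}}\LCS(\pi,\rho)\;\ge\;4M+\tfrac{3s}{2}+m.
\]
The plan is to show that this LCS threshold is achievable exactly when \(\varphi\) is satisfiable. The two tools are \Cref{lem:anchor-block}, which reduces to canonical permutations, and \Cref{lem:canonical-score}, which gives the exact score for a fixed partition. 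Together they reduce everything to the quantity \(\LCS(\beta_1|_{S_L},\beta_2|_{S_L})+\LCS(\alpha_1|_{S_R},\alpha_2|_{S_R})\).

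For the forward direction, take a consistent and satisfying truth partition \(\Sigma=T\mathbin{\dot\cup}F\), which exists by \Cref{obs:truth-partition}. Set \(S_L:=T\) and \(S_R:=F\).
\begin{itemize}
\item Since \(T\) meets every clause, \Cref{lem:clause-gadget} gives \(\LCS(\beta_1|_{T},\beta_2|_{T})=m\).
\item Consistency means that, for every variable \(x\), \(F\) contains all of \(\Sigma_x^+\) or all of \(\Sigma_x^-\). So \Cref{lem:consistency-gadget} gives \(\LCS(\alpha_1|_F,\alpha_2|_F)=s/2\).
\end{itemize}
By \Cref{lem:canonical-score}, there are orders \(\tau_L,\tau_R\) with total LCS equal to \(4M+s+m+s/2\), which is exactly the threshold.

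For the converse, suppose some \(\pi\) meets the threshold. Apply \Cref{lem:anchor-block} with \(\gamma_i=\beta_i\) and \(\delta_i=\alpha_i\). This yields a canonical \(\pi^\star=\tau_LZ\tau_R\) whose four LCS values do not decrease, and hence whose sum is still at least \(4M+\tfrac{3s}{2}+m\). \Cref{lem:canonical-score} then gives
\[
\LCS(\beta_1|_{S_L},\beta_2|_{S_L})+\LCS(\alpha_1|_{S_R},\alpha_2|_{S_R})\;\ge\; m+\tfrac{s}{2}.
\]
The first term is at most \(m\) by \Cref{lem:clause-gadget}, and the second is at most \(s/2\) by \Cref{lem:consistency-gadget}. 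So both bounds are tight:
\begin{itemize}
\item \(S_L\) meets every clause.
\item For every variable \(x\), \(S_R\) contains all of \(\Sigma_x^+\) or all of \(\Sigma_x^-\).
\end{itemize}

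The step I expect to need care is converting this into a consistent, satisfying partition. The difficulty is that \(S_L\) might contain both polarities of a variable, for example when \(S_R\) holds all of \(\Sigma_x^+\) and only part of \(\Sigma_x^-\). I would resolve it as follows. For each \(x\), choose one polarity fully contained in \(S_R\), and let \(F\) be the union of these chosen blocks; let \(T:=\Sigma\setminus F\).
\begin{itemize}
\item The partition \(T\mathbin{\dot\cup}F\) is consistent by construction.
\item Since \(F\subseteq S_R\), we have \(T\supseteq S_L\), so \(T\) meets every clause and the partition is satisfying.
\end{itemize}
\Cref{obs:truth-partition} then shows that \(\varphi\) is satisfiable. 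Finally, the construction is clearly polynomial-size.
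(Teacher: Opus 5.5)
Your proposal is correct and follows the paper's proof essentially step by step. It uses the same conversion to an LCS threshold of \(4M+\tfrac{3s}{2}+m\), the same use of \Cref{lem:anchor-block} and \Cref{lem:canonical-score} to force both gadget bounds to be tight, and the same construction of \(F\) as a union of polarity blocks contained in \(S_R\), so that \(S_L\subseteq T\).
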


\begin{proof}
Suppose first that \(\varphi\) is satisfiable. By
\Cref{obs:truth-partition}, there exists a consistent and satisfying
truth partition \(\Sigma=T\mathbin{\dot\cup}F\). Since \(T\) contains
a literal occurrence from every clause, \Cref{lem:clause-gadget} gives
\(\LCS(\beta_1|_T,\beta_2|_T)=m\). Moreover, consistency implies that,
for every variable \(x\), the set \(F\) contains either all positive
occurrences of \(x\) or all negative occurrences of \(x\). Hence
\Cref{lem:consistency-gadget} gives
\(\LCS(\alpha_1|_F,\alpha_2|_F)=s/2\).

Applying \Cref{lem:canonical-score} with \(S_L=T\) and \(S_R=F\), we
obtain a permutation \(\pi^\star\) for which the sum of the four LCS
lengths is
\[
4M+s+m+\frac{s}{2}
=
4M+\frac{3s}{2}+m.
\]
Each input permutation has length \(M+s\). Therefore,
\[
\sum_{\rho\in\{L_1,L_2,R_1,R_2\}}
d_U(\pi^\star,\rho)
=
4(M+s)-\left(4M+\frac{3s}{2}+m\right)
=
\frac{5s}{2}-m.
\]

Conversely, suppose there exists a permutation \(\pi\) whose total
Ulam distance from \(L_1,L_2,R_1,R_2\) is at most \(5s/2-m\).
Equivalently, the sum of the corresponding four LCS lengths is at
least \(4M+3s/2+m\). By \Cref{lem:anchor-block}, we may replace
\(\pi\), without decreasing any of these LCS lengths, by a permutation
\(\tau_LZ\tau_R\) inducing a partition
\(\Sigma=S_L\mathbin{\dot\cup}S_R\).

By \Cref{lem:canonical-score}, the sum of the four LCS lengths is at
most
\[
4M+s
+\LCS(\beta_1|_{S_L},\beta_2|_{S_L})
+\LCS(\alpha_1|_{S_R},\alpha_2|_{S_R}).
\]
The final two terms are at most \(m\) and \(s/2\), respectively, by
\Cref{lem:clause-gadget,lem:consistency-gadget}. Since the sum is at
least \(4M+3s/2+m\), both upper bounds must be attained. Thus,
\(S_L\) contains an occurrence from every clause, and, for every
variable \(x\), the set \(S_R\) contains either all positive
occurrences of \(x\) or all negative occurrences of \(x\).

For each variable \(x\), choose one of the sets
\(\Sigma_x^+\) and \(\Sigma_x^-\) that is contained in \(S_R\), and
let \(F\) be the union of the chosen sets. Set
\(T:=\Sigma\setminus F\). By construction,
\(\Sigma=T\mathbin{\dot\cup}F\) is consistent. Moreover,
\(F\subseteq S_R\), and hence \(S_L\subseteq T\). Since \(S_L\)
intersects every clause, so does \(T\). Therefore,
\(\Sigma=T\mathbin{\dot\cup}F\) is a consistent and satisfying truth
partition. By \Cref{obs:truth-partition}, the formula \(\varphi\) is
satisfiable.
\end{proof}

Clearly, the reduction takes polynomial time. This completes the proof of \Cref{thm:main}.

\section{Proof of the Anchor-Block Lemma}
\label{sec:anchor-block-proof}

In this section, we give a proof of the anchor block lemma.

\begin{proof}[Proof of~\Cref{lem:anchor-block}]
Fix an arbitrary permutation \(\pi\) of
\(\Sigma\mathbin{\dot\cup}\Sigma_b\). We transform \(\pi\) into the
desired canonical form in two steps.

\paragraph{Step 1: sorting the anchor symbols.}
Let \(\widehat{\pi}\) be obtained from \(\pi\) by permuting only the
symbols of \(\Sigma_b\) so that they occur in the order
\(z_1,z_2,\ldots,z_M\). The symbols of \(\Sigma\) remain in their
original locations and hence retain their relative order. We claim
that, for each \(i\in\{1,2\}\),
\[
\LCS(\widehat{\pi},\gamma_iZ)
\geq
\LCS(\pi,\gamma_iZ)
\quad\text{and}\quad
\LCS(\widehat{\pi},Z\delta_i)
\geq
\LCS(\pi,Z\delta_i).
\]

For \(x\in\Sigma\), let \(a(x)\) denote the number of symbols from
\(\Sigma_b\) that precede \(x\) in \(\pi\). Since the reordering does
not change the locations occupied by symbols of \(\Sigma_b\), the
anchor symbols preceding \(x\) in \(\widehat{\pi}\) are precisely
\(z_1,\ldots,z_{a(x)}\), while those following \(x\) are precisely
\(z_{a(x)+1},\ldots,z_M\).

Fix \(i\in\{1,2\}\), and let \(\rho\) be a longest common subsequence
of \(\pi\) and \(\gamma_iZ\). If \(\rho\) contains no symbol of
\(\Sigma\), then \(|\rho|\leq M\), whereas \(Z\) is a common
subsequence of \(\widehat{\pi}\) and \(\gamma_iZ\) of length \(M\).

Otherwise, let \(x\) be the last symbol of \(\Sigma\) appearing in
\(\rho\). Since every symbol of \(\Sigma\) precedes every anchor symbol
in \(\gamma_iZ\), all anchor symbols appearing in \(\rho\) must follow
\(x\) in \(\pi\). Hence \(\rho\) contains at most \(M-a(x)\) anchor
symbols. Moreover,
\[
\bigl(\rho|_\Sigma\bigr)
z_{a(x)+1}z_{a(x)+2}\cdots z_M
\]
is a common subsequence of \(\widehat{\pi}\) and \(\gamma_iZ\). Its
length is at least \(|\rho|\), and therefore
\[
\LCS(\widehat{\pi},\gamma_iZ)
\geq
\LCS(\pi,\gamma_iZ).
\]
The proof for \(Z\delta_i\) is symmetric. Thus,
\[
\LCS(\widehat{\pi},Z\delta_i)
\geq
\LCS(\pi,Z\delta_i).
\]
It therefore suffices to consider the case in which the anchor symbols
occur in the order \(z_1,z_2,\ldots,z_M\). We may then write
\[
\widehat{\pi}
=
U_0z_1U_1z_2\cdots z_MU_M,
\]
where each \(U_j\) is a possibly empty string over \(\Sigma\).

\paragraph{Step 2a: analyzing \(\gamma_iZ\).}
Fix \(i\in\{1,2\}\). We claim that
\begin{equation}
\LCS(\widehat{\pi},\gamma_iZ)
=
M+\max_{0\leq t\leq M}
\bigl(
\LCS(U_0U_1\cdots U_t,\gamma_i)-t
\bigr).
\label{eq:cut-1}
\end{equation}

Indeed, for any \(t\in\{0,\ldots,M\}\), a common subsequence of
\(U_0U_1\cdots U_t\) and \(\gamma_i\) can be followed by
\(z_{t+1}z_{t+2}\cdots z_M\). Hence
\[
\LCS(\widehat{\pi},\gamma_iZ)
\geq
\LCS(U_0U_1\cdots U_t,\gamma_i)+M-t.
\]

For the reverse inequality, let \(\rho\) be any common subsequence of
\(\widehat{\pi}\) and \(\gamma_iZ\). If \(\rho\) contains no symbol
from \(\Sigma\), then \(|\rho|\leq M\), which is bounded by the
right-hand side of~\Cref{eq:cut-1} by taking \(t=0\).

Otherwise, suppose that the last symbol of \(\Sigma\) appearing in
\(\rho\) belongs to \(U_t\). Then \(\rho|_\Sigma\) is a common
subsequence of \(U_0U_1\cdots U_t\) and \(\gamma_i\). Moreover, every
symbol of \(\Sigma_b\) appearing in \(\rho\) must belong to
\(z_{t+1}z_{t+2}\cdots z_M\), so \(\rho\) contains at most \(M-t\)
such symbols. Therefore,
\[
|\rho|
\leq
\LCS(U_0U_1\cdots U_t,\gamma_i)+M-t.
\]
This proves~\Cref{eq:cut-1}.

Since \(\gamma_i\) has length \(s\), we have
\(\LCS(U_0U_1\cdots U_t,\gamma_i)\leq s\). Thus, when \(t>s\), the
quantity inside the maximum in~\Cref{eq:cut-1} is negative, whereas
for \(t=0\) it is nonnegative. Consequently, the maximum is attained
for some \(t\leq s\).

It follows that, for each \(i\in\{1,2\}\), there exists a longest
common subsequence of \(\widehat{\pi}\) and \(\gamma_iZ\) whose symbols
from \(\Sigma\) all belong to \(U_0U_1\cdots U_s\).

\paragraph{Step 2b: analyzing \(Z\delta_i\).}
A symmetric argument gives
\begin{equation}
\LCS(\widehat{\pi},Z\delta_i)
=
M+\max_{0\leq t\leq M}
\bigl(
\LCS(U_tU_{t+1}\cdots U_M,\delta_i)-(M-t)
\bigr).
\label{eq:cut-2}
\end{equation}

Since \(\delta_i\) has length \(s\), we have
\(\LCS(U_tU_{t+1}\cdots U_M,\delta_i)\leq s\). Thus, if
\(M-t>s\), then the quantity inside the maximum
in~\Cref{eq:cut-2} is negative, whereas for \(t=M\) it is
nonnegative. Consequently, the maximum is attained for some
\(t\geq M-s\). Since \(M=2s+1\), we have \(M-s=s+1\).

It follows that, for each \(i\in\{1,2\}\), there exists a longest
common subsequence of \(\widehat{\pi}\) and \(Z\delta_i\) whose symbols
from \(\Sigma\) all belong to
\(U_{s+1}U_{s+2}\cdots U_M\).

\paragraph{Step 2c: constructing the canonical permutation.}
Let \(S_L\) be the set of symbols appearing in
\(U_0U_1\cdots U_s\), and let \(S_R:=\Sigma\setminus S_L\). Define
\[
\tau_L:=U_0U_1\cdots U_s,
\qquad
\tau_R:=U_{s+1}U_{s+2}\cdots U_M,
\]
and set
\[
\pi^\star:=\tau_LZ\tau_R.
\]
We show that \(\pi^\star\) satisfies all four inequalities in the
statement of the lemma.

Fix \(i\in\{1,2\}\), and let \(\rho\) be a longest common subsequence
of \(\widehat{\pi}\) and \(\gamma_iZ\) whose symbols from \(\Sigma\)
all belong to \(U_0U_1\cdots U_s\), as guaranteed by Step~2a. Set
\(\rho_\Sigma:=\rho|_\Sigma\). The string \(\rho_\Sigma\) is a
subsequence of both \(\tau_L\) and \(\gamma_i\), since the construction
of \(\tau_L\) preserves the relative order of all symbols in
\(U_0U_1\cdots U_s\). It follows that
\(\rho_\Sigma Z\) is a common subsequence of
\(\pi^\star=\tau_LZ\tau_R\) and \(\gamma_iZ\). Moreover,
\[
|\rho_\Sigma Z|
=
|\rho_\Sigma|+M
\geq
|\rho|,
\]
because \(\rho\) contains at most \(M\) symbols from \(\Sigma_b\).
Therefore,
\[
\LCS(\pi^\star,\gamma_iZ)
\geq
\LCS(\widehat{\pi},\gamma_iZ).
\]

Similarly, let \(\rho\) be a longest common subsequence of
\(\widehat{\pi}\) and \(Z\delta_i\) whose symbols from \(\Sigma\) all
belong to \(U_{s+1}U_{s+2}\cdots U_M\), as guaranteed by Step~2b, and
again set \(\rho_\Sigma:=\rho|_\Sigma\). The string
\(\rho_\Sigma\) is a subsequence of both \(\tau_R\) and \(\delta_i\).
Consequently, \(Z\rho_\Sigma\) is a common subsequence of
\(\pi^\star\) and \(Z\delta_i\), and
\[
|Z\rho_\Sigma|
=
M+|\rho_\Sigma|
\geq
|\rho|.
\]
Hence
\[
\LCS(\pi^\star,Z\delta_i)
\geq
\LCS(\widehat{\pi},Z\delta_i).
\]

Finally, Step~1 established that sorting the symbols of \(\Sigma_b\)
does not decrease any of the relevant LCS lengths. Thus, for each
\(i\in\{1,2\}\),
\[
\LCS(\pi^\star,\gamma_iZ)
\geq
\LCS(\widehat{\pi},\gamma_iZ)
\geq
\LCS(\pi,\gamma_iZ),
\]
and
\[
\LCS(\pi^\star,Z\delta_i)
\geq
\LCS(\widehat{\pi},Z\delta_i)
\geq
\LCS(\pi,Z\delta_i).
\]
This proves the lemma.
\end{proof}

\bibliographystyle{alphaurl}
\bibliography{refs}

\end{document}